\newtheorem{theorem}{Theorem}
\newtheorem{example}{Example}
\DeclarePairedDelimiter\ceil{\lceil}{\rceil}
\newcommand\Mycomb[2][^n]{\prescript{#1\mkern-0.5mu}{}C_{#2}}
\begin{document}
\title{Optimal T depth quantum circuits for implementing arbitrary Boolean functions}
\author{Suman Dutta$^{1}$, Anik Basu Bhaumik$^{2}$, Anupam Chattopadhyay${^2}$, Subhamoy Maitra${^1}$}
\affiliation{$^1$Applied Statistics Unit, Indian Statistical Institute, Kolkata, India, $^2$College of Computing \& Data Science, Nanyang Technological University, Singapore}
\email{sumand.iiserb@gmail.com, anikbasu001@e.ntu.edu.sg\\anupam@ntu.edu.sg, subho@isical.ac.in}
\begin{abstract}
In this paper we present a generic construction to obtain an optimal T depth quantum circuit for any arbitrary $n$-input $m$-output Boolean function $f: \{0,1\}^n \rightarrow \{0,1\}^m$ having algebraic degree $k \leq n$, and it achieves an exact Toffoli (and T) depth of $\lceil \log_2 k \rceil$. This is a broader generalization of the recent result establishing the optimal Toffoli (and consequently T) depth for multi-controlled Toffoli decompositions (Dutta et al., Phys. Rev. A, 2025). We achieve this by inspecting the Algebraic Normal Form (ANF) of a Boolean function. Obtaining a benchmark for the minimum T depth of such circuits are of prime importance for efficient implementation of quantum algorithms by enabling greater parallelism, reducing time complexity, and minimizing circuit latency—making them suitable for near-term quantum devices with limited coherence times. The implications of our results are highlighted explaining the provable lower bounds on S-box and block cipher implementations, for example AES. 
\end{abstract}
\maketitle 

\paragraph*{Introduction--}
\label{sec:intro}
Quantum computing is one of the most fundamental aspects of quantum physics \cite{feynman1986}. Needless to mention, that quantum algorithms often rely on the efficient encoding of classical Boolean functions into quantum circuits through a process known as oracle construction. An oracle is a reversible quantum circuit that implements a Boolean function and is integral to several landmark algorithms, including Deutsch-Jozsa~\cite{dj}, Grover's search~\cite{grover}, Simon's algorithm~\cite{simon}, and Shor's factoring algorithm~\cite{shor}.

Quantum gates, represented by unitary matrices, are the basic building blocks of quantum circuits. Unlike classical gates, they are inherently reversible. In fault-tolerant quantum computing \cite{jones2013pra}, implementing quantum oracles is particularly challenging due to the high resource requirements associated with multi-controlled Toffoli (MCT) gates, which must be decomposed into a universal gate set such as Clifford+T \cite{bravyi2005pra}. This decomposition directly impacts key resource metrics, including gate count, circuit depth, and qubit requirements, each critical for the feasibility of large-scale quantum computations. Hence, minimizing these resources is essential for minimizing errors and improving algorithmic efficiency.

Circuit depth is calculated by the number of layers of gates, assuming parallel execution on disjoint qubits. The T depth explicitly measures the number of such layers containing T gates, a crucial measure in fault-tolerant quantum computation. Since T gates are resource-intensive to implement, often requiring costly magic state distillation, minimizing T depth directly reduces circuit latency and execution time. This is particularly important for near-term quantum devices with limited coherence times. As a result, T depth optimization has become a key focus in the design of efficient quantum circuits \cite{Selinger2013pra,amy2013}.

In \cite[Page 11]{dutta2025pra}, the authors emphasized the importance of estimating resource requirements for arbitrary $n$-input, $m$-output Boolean functions, generalizing beyond multivariate AND functions. Motivated by this, we present, for the first time, optimal T depth quantum circuits for arbitrary Boolean functions. Our construction, which generalizes and subsumes the methodology of \cite{dutta2025pra}, achieves minimal Toffoli (and consequently T) depth and broadly applies to oracle construction and S-box synthesis. While the approach incurs a significant qubit overhead, it offers a compelling trade-off by substantially reducing T depth, thereby enhancing the feasibility of practical quantum algorithm implementations. Our theoretical results are explained in practical terms by analyzing the popular and standard block cipher AES, that has a complicated Boolean circuit.

\paragraph*{Preliminaries--}
\label{sec:pre}
Let $\mathbb{F}_2=\{0,1\}$ be the prime field of characteristic $2$ and $\mathbb{F}^n_2\equiv \{\mathbf{x}=\left(x_1, \ldots, x_n \right):x_i\in\mathbb{F}_2,1\leq i \leq n \}$ be the vector space of dimension $n$ over $\mathbb{F}_2$. An $n$-input $m$-output Boolean function $f$ is defined as a mapping $f:\mathbb{F}^n_2\rightarrow\mathbb{F}^m_2$. The set of all $n$-input $m$-output Boolean functions is denoted by $\mathcal{B}_n^m$. For $m=1$, the set of all $n$-input, single-output Boolean functions is given by $\mathcal{B}_n^1\equiv \mathcal{B}_n$.

The Algebraic Normal Form (ANF) of a Boolean function $f\in\mathcal{B}_n$ can be represented by a polynomial over $\mathbb{F}_2$ in $n$ binary variables, given by
\begin{align*}
    f(x_1, \ldots, x_n) =& a_0 \oplus \bigoplus_{I\subseteq \{1,\ldots , n\}}a_{I}\prod_{i\in I} x_i
\end{align*}
where $a_{I}\in\mathbb{F}_2$. The algebraic degree of a Boolean function $f\in\mathcal{B}_n$, denoted as $\deg (f)$, is given by the maximum cardinality of $I$, such that $a_{I}\neq 0$. Clearly, $\deg(f) \leq n$. A Boolean function $f\in\mathcal{B}_n$ is nonlinear if $\deg(f) > 1$, affine if $\deg(f) = 1$, and constant if $\deg(f) = 0$.

Given $f\in\mathcal{B}^m_n$, the output $f(\mathbf{x})=\mathbf{a}$ does not uniquely determine the input $\mathbf{x}$, as $f$ is inherently irreversible. In contrast, quantum operations are reversible by nature (excluding measurement), and thus the corresponding quantum circuit must be reversible. The quantum circuit implementing $f$ operates on $n+m$ qubits, where the first $n$ qubits encode the input state $\ket{\mathbf{x}}$, and the final $m$ qubits, initialized to $\ket{y}$, stores the functional output, defined as $U_f: \ket{\mathbf{x}}\ket{\mathbf{y}} \rightarrow \ket{\mathbf{x}}\ket{\mathbf{y}\oplus \mathbf{f(\mathbf{x})}}.$

Given the ANF of a Boolean function $f\in\mathcal{B}_n$, the reversible quantum circuit of $f$ can be constructed using CNOT gates for linear terms and multi-controlled Toffoli (MCT) gates for nonlinear terms, with the respective input variable(s) as control qubit(s) and the output qubit as the target. Since MCT gates are non-native to quantum hardware, they must be decomposed into a universal gate set for practical implementation.

The Clifford+T gate set is the most widely used universal gate set in quantum computing due to its compatibility with fault-tolerant implementations. The Clifford group comprises the Hadamard, Phase, and CNOT gates. Augmenting this set with the T gate yields universality, enabling the approximation of any unitary operation to arbitrary precision.

Recently, substantial efforts have been directed toward optimizing the Clifford+T decomposition of the Toffoli gate. State-of-the-art techniques employ measurement-based uncomputation, achieving Toffoli decompositions with four T gates. Notably, Gidney’s construction \cite{gidney2018quantum} attains a T depth $1+1$ without any ancilla qubit (Fig. \ref{fig:logicand}) by executing the initial T gates in parallel. In contrast, Soeken’s design \cite{jaques2020eurocrypt} achieves a T depth of 1 using a single reusable ancilla qubit (Fig. \ref{fig:mathias}). A comprehensive summary of optimized Toffoli decompositions across key metrics is provided in \cite[Table 1]{dutta2025pra}.
\begin{figure}[htbp]
\centering
\begin{subfigure}{.45\textwidth}
    \centering
    \includegraphics[width=1\linewidth]{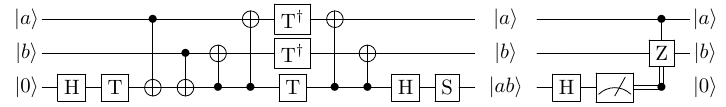}
    \caption{}
    \label{fig:logicand}
\end{subfigure}\\
\begin{subfigure}{.45\textwidth}
    \centering
    \includegraphics[width=1\linewidth]{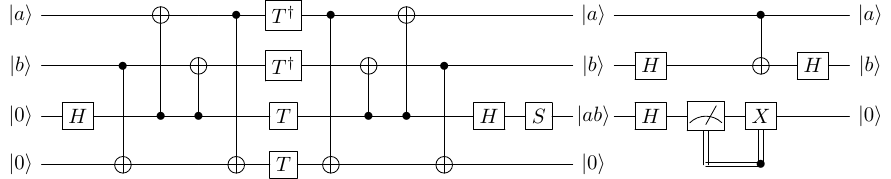}
    \caption{}
    \label{fig:mathias}
\end{subfigure}
\caption{Toffoli decomposition using four T gates, (a) without ancilla using logical-AND \cite{gidney2018quantum}, (b) with a single reusable ancilla \cite{jaques2020eurocrypt}.}
\label{fig:toffoli}
\end{figure}

In addition to optimizing basic Toffoli gates, considerable progress has been made towards the efficient implementations of multi-controlled Toffoli gates. Recent work~\cite{dutta2025pra} demonstrates that employing a binary tree structure in an $n$-MCT circuit decomposition achieves an optimal T depth of $\ceil{\log_2 n}$, utilizing a maximum of $2(n-1)$ ancilla qubits and $4(n-1)$ T gates.
A comprehensive summary of recent advancements in MCT circuit decompositions is provided in \cite[Table II]{dutta2025pra}.

Our theoretical contribution is in the following section, where we present optimal Toffoli (consequently T) depth quantum circuits implementation for evaluating any Boolean function that subsumes the idea of \cite{dutta2025pra} and the fundamental observation is to note the XOR (GF$(2)$ addition) of AND (GF$(2)$ multiplication) in ANF along with the tree implementation. This we will explain in the next section. 
\paragraph*{Optimal T depth quantum resource estimation--}
\label{sec:cont1}
In this section, we present an optimal T depth quantum circuit for implementing any arbitrary $n$-input $m$-output Boolean function $f$, successfully completing the extension idea outlined in \cite[Page 11]{dutta2025pra}. 

As discussed earlier, implementing higher-degree terms in the ANF of a Boolean function $f$ requires multi-controlled Toffoli (MCT) gates, where the corresponding variables act as control qubits and the qubit storing the output serves as the target. Specifically, a degree-$k$ term necessitates a $k$-MCT gate. According to \cite[Corollary~1]{dutta2025pra}, a $k$-MCT gate can be decomposed into the Clifford+T gate set via Toffoli decomposition, yielding a T count of $4(k-1)$ and a T depth of $\lceil \log_2 k \rceil$. Using a binary tree structure, the implementation of the highest-degree term $x_1x_2\ldots x_n$ in an $n$-variable Boolean function incurs a T count of $4(n-1)$ and a T depth of $\lceil \log_2 n \rceil$.

Additionally, the ANF of an arbitrary Boolean function $f \in \mathcal{B}_n$ can contain up to $2^n - (n + 1)$ non-linear terms. More precisely, it may include at most $\Mycomb[n]{k}$ degree-$k$ terms, each requiring a $k$-controlled Toffoli gate in its quantum implementation. If all $k$-MCT gates are executed in parallel, the MCT depth becomes $1$, with the largest gate being an $n$-MCT, having T depth $\lceil \log_2 n \rceil$. This implies that the complete ANF of $f\in\mathcal{B}_n$ can be implemented in quantum with T depth $\ceil{\log_2 n}$.

Needless to mention that, to implement all the MCT gates in parallel, multiple copies of the input variables are required, along with distinct ancilla qubits to store the corresponding outputs. These copies are generated using CNOT gates, which, being Clifford operations, do not contribute to the T depth of the circuit. Summarizing these observations, we present the following theorem.
\begin{theorem}
\label{th:1}
    Let $f\in \mathcal{B}^m_n$ be an $n$-input $m$-output Boolean function. Then, the Clifford+T decomposition of the quantum circuit implementing $f$ can be realized with an optimal T depth of $\ceil{\log_2 n}$, utilizing the following resources:
    (i) at most $2^{n-1}\left(3n-2 \right) - 3n + m +1$ reusable ancilla qubits, (ii) a total of $2^{n+1}\left(n-2 \right) + 4$ T gates, and (iii) a maximum of $2^{n-1}\left(11n + 2m -18 \right) -4n - m +9$ CNOT gates, with (iv) a CNOT depth of $2^n+2n+9\ceil{\log_2 n}-3$.
\end{theorem}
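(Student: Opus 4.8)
The plan is to build the circuit directly from the ANF and to exploit that every monomial can be realized by an independent binary-tree MCT, all fired in a single parallel T-stage. Concretely, I would first write each of the $m$ output coordinates $f_j$ in ANF and collect the distinct monomials appearing across all coordinates; there are at most $\binom{n}{k}$ monomials of degree $k$. Each degree-$k$ monomial $\prod_{i\in I}x_i$ is computed once, onto its own result ancilla, by the binary-tree $k$-MCT of \cite[Corollary~1]{dutta2025pra}, costing $4(k-1)$ T gates, $2(k-1)$ tree ancilla, and T depth $\lceil\log_2 k\rceil$. To let all these MCTs act simultaneously I would fan out, using CNOTs, enough copies of each input $x_i$ so that no two concurrently executing gates share a control line; since $x_i$ occurs in exactly $2^{n-1}$ subsets, this needs $2^{n-1}-1$ extra copies per variable. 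Finally, for each output register I would XOR in (fan in) the monomials of $f_j$ and then uncompute the trees and copies so that every ancilla is reusable.

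The T-depth claim is then immediate: copying and combining are Clifford and contribute nothing, each monomial's MCT has T depth $\lceil\log_2 k\rceil\le\lceil\log_2 n\rceil$, and since the monomials live on disjoint qubits they overlap in one T-stage, giving total T depth $\lceil\log_2 n\rceil$. Optimality follows from the lower bound of \cite{dutta2025pra}: a function of algebraic degree $k$ contains a degree-$k$ monomial whose evaluation alone forces T depth $\lceil\log_2 k\rceil$, equal to $\lceil\log_2 n\rceil$ in the worst case $k=n$.

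For the T count (ii) I would sum the per-monomial cost; crucially this is independent of $m$, because each distinct monomial is computed only once and then shared among the outputs by CNOTs. Using $\sum_{k=0}^{n}k\binom{n}{k}=n2^{n-1}$ and $\sum_{k=0}^{n}\binom{n}{k}=2^n$,
\begin{equation*}
\sum_{k=1}^{n}4(k-1)\binom{n}{k}=4\bigl(n2^{n-1}-(2^n-1)\bigr)=2^{n+1}(n-2)+4 .
\end{equation*}
The ancilla count (i) adds the tree ancilla $\sum_{k\ge 2}2(k-1)\binom{n}{k}$, the copy ancilla $n(2^{n-1}-1)$, one result line per nonlinear monomial, and the $m$ output qubits; the same binomial identities collapse this to $2^{n-1}(3n-2)-3n+m+1$. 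The CNOT count (iii) is obtained analogously by adding the fan-out, the CNOTs internal to each MCT tree, and the fan-in into the $m$ outputs.

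The hard part will be the exact CNOT bookkeeping behind (iii) and, especially, the CNOT depth (iv). There one must check that the copies truly suffice for full parallelism while remaining reusable after uncomputation, and that the depth composes correctly across the three stages: the fan-out/fan-in over up to $2^n$ monomials should contribute the dominant $2^n+2n-3$, while the CNOT layers interleaved with the $\lceil\log_2 n\rceil$ T-layers inside each binary-tree MCT contribute the $9\lceil\log_2 n\rceil$ term. Verifying that these stages assemble to \emph{exactly} $2^n+2n+9\lceil\log_2 n\rceil-3$, rather than merely $O(2^n)$, is the delicate step; by contrast the T-depth, T-count, and qubit tallies reduce to routine binomial summations once the architecture above is fixed.
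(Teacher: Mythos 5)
Your architecture is the paper's own: CNOT fan-out of the inputs, one binary-tree $k$-MCT per nonlinear monomial (via \cite[Corollary~1]{dutta2025pra}) fired in a single parallel stage, CNOT fan-in of the monomials into the $m$ output registers, and measurement-based uncomputation to make the ancillas reusable. Your T-depth, optimality, and T-count arguments coincide with the paper's proof essentially verbatim.

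Two concrete issues keep the proposal from being fully correct. First, the copy bookkeeping is internally inconsistent. The variable $x_i$ lies in $2^{n-1}$ subsets containing $i$, but only $2^{n-1}-1$ of those give nonlinear monomials, and the input line itself serves as one of the needed controls, so only $2^{n-1}-2$ fresh copies per variable are required; this is the paper's count $\sum_{k=2}^{n}k\binom{n}{k}-n=n\left(2^{n-1}-2\right)$. With your figure of $2^{n-1}-1$ extra copies per variable, i.e.\ $n\left(2^{n-1}-1\right)$ in total, the ancilla tally collapses to $2^{n-1}(3n-2)-2n+m+1$, which exceeds bound (i) by $n$; your claim that ``the same binomial identities'' yield the stated formula is therefore false under your own accounting. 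Second, items (iii) and (iv) are left as a plan, not a proof. For the record, the assembly you conjecture is exactly what the paper verifies: the CNOT count is $2\big[\sum_{k=2}^{n}k\binom{n}{k}-n\big]+m\big[\sum_{k=2}^{n}\binom{n}{k}+n\big]+\sum_{k=2}^{n}9(k-1)\binom{n}{k}$, and the CNOT depth splits as $2(n-1)$ for fan-out plus uncomputation (depth $n-1$ each), $2^n-1$ for the fan-in (a coordinate function has at most $\sum_{k=2}^{n}\binom{n}{k}+n=2^n-1$ monomials), and $9\lceil\log_2 n\rceil$ from the trees, since each T layer of the T-depth-1 Toffoli of Fig.~\ref{fig:mathias} contributes CNOT depth $9$. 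These are routine binomial sums once the architecture is fixed, so the remaining gap is one of execution rather than ideas---but as submitted, (i) is off by $n$ and (iii)--(iv) are unverified.
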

\begin{proof}
There are $\Mycomb[n]{k}$ many degree-$k$ monomials, each of length $k$. Thus, computing all such nonlinear terms in parallel requires $\sum_{k=2}^{n} k\Mycomb[n]{k}$ copies of the input variables. Since one copy of each input already exists, this results in an additional $\big[\sum_{k=2}^{n} k\Mycomb[n]{k} - n\big]$ ancilla qubits, along with an equal number of CNOT gates for both computation and uncomputation. Each contributes a CNOT depth of $\ceil{\log_2\left(\sum_{k=2}^{n}\Mycomb[n-1]{k-1} - n\right)} = n-1$ each. Additionally, $\sum_{k=2}^{n}\Mycomb[n]{k}$ ancilla qubits are required to store the outputs of these MCT gates in parallel.

Once all possible nonlinear monomials in the ANF are implemented, any set of $m$ single-output Boolean functions can be realized by copying at most $\big[\sum_{k=2}^{n}\Mycomb[n]{k} + n\big]$ monomials per function onto $m$ ancilla qubits. This requires at most $m(\sum_{k=2}^{n}\Mycomb[n]{k} + n)$ CNOT gates and a maximum CNOT depth of $(2^n-1)$.

Finally, by \cite[Corollary 1]{dutta2025pra}, the optimal T depth decomposition of each $k$-MCT gate requires $2(k-1)$ ancilla qubits and $9(k-1)$ CNOT gates, with ancilla qubits made reusable via measurement-based uncomputation. Since, one T depth corresponds to a CNOT depth of 9, this contributes an overall CNOT depth of $9\ceil{\log_2 n}$. Hence,
\begin{itemize}
    \item \# Ancilla: $\big[\sum_{k=2}^{n} k\Mycomb[n]{k} - n\big] + \sum_{k=2}^{n}\Mycomb[n]{k} +m + \sum_{k=2}^{n} 2(k-1)\Mycomb[n]{k} = 2^{n-1}\left(3n-2 \right) - 3n +m +1$,
    \item \# T gates: $\sum_{k=2}^{n} 4(k-1)\Mycomb[n]{k} = 2^{n+1}(n-2) +4$,
    \item \# CNOT gates: $2\big[\sum_{k=2}^{n} k\Mycomb[n]{k} - n\big] + m\big[\sum_{k=2}^{n}\Mycomb[n]{k} + n\big] + \sum_{k=2}^{n} 9(k-1)\Mycomb[n]{k} = 2^{n-1}\left(11n +2m- 18 \right)- 4n -m +9$.
    \item CNOT depth: $2(n-1)+(2^n-1) + 9\ceil{\log_2 n}= 2^n+ 2n + 9\ceil{\log_2 n} - 3$.
\end{itemize}
\end{proof}
Theorem \ref{th:1} applies directly to the quantum circuit synthesis of S-boxes, where an $n$-bit S-box corresponds to an $n$-input, $n$-output Boolean function. Accordingly, the resultant circuit requires at most $2^{n-1}\left(3n-2 \right) - 2n+1$ reusable ancilla qubits, $2^{n+1}\left(n-2 \right) + 4$ T gates, and up to $2^{n-1}\left(13n - 18 \right)-5n +9$ CNOT gates. The CNOT depth is bounded by $2^n+2n+9\ceil{\log_2 n}-3$, while achieving an optimal T depth of $\lceil \log_2 n \rceil$. To illustrate the construction, we present the 3-bit S-box used in LowMC \cite{lowmc2015eurocrypt} in the following example.
\begin{example}
    Here, the coordinate Boolean functions are given by: $f_0 = x_0\oplus x_1x_2$, $f_1 = x_0 \oplus x_1 \oplus x_0x_2$, and $f_2 = x_0 \oplus x_1 \oplus x_2\oplus x_0x_1$. Since the maximum algebraic degree among $f_1,f_2,f_3\in\mathcal{B}_3$ is $2$, the corresponding quantum circuit (see Fig. \ref{fig:lowmc-T}) achieves the optimal T depth of $\ceil{\log_2 2} = 1$, using 9 ancilla qubits, 12 T gates, and at most 33 CNOT gates. Notably, as the ANFs of these functions do not contain all possible monomials, the resource requirements are significantly lower than the worst-case estimates in Theorem \ref{th:1}. In practice, such sparsity often leads to substantially reduced overheads compared to theoretical bounds.
    \begin{figure*}[htbp]
        \centering
        \includegraphics[width=0.9\linewidth]{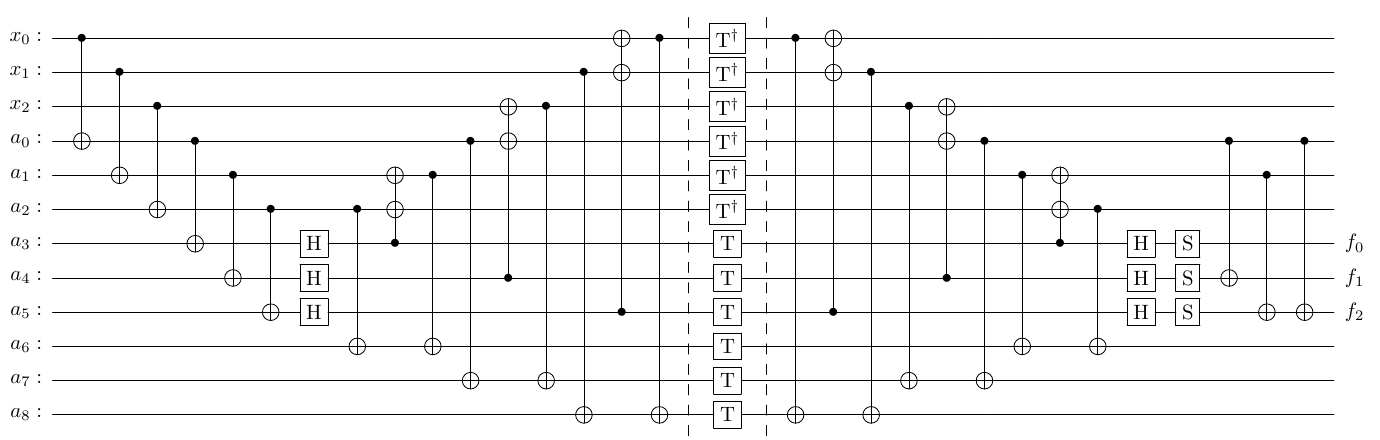}
        \caption{Quantum circuit implementing a 3-bit S-box (used in LowMC) with T depth 1.}
        \label{fig:lowmc-T}
    \end{figure*}
\end{example}

From Theorem \ref{th:1}, the optimal T depth Clifford+T decomposition of an $n$-input, single-output Boolean function $f\in\mathcal{B}_n$ requires at most $2^{n-1}\left(3n-2 \right) - 3n + 2$ reusable ancilla qubits, $2^{n+1}\left(n-2 \right) + 4$ T gates, and up to $2^{n-1}\left(11n -16 \right)-4n +8$ CNOT gates, with a CNOT depth of $2^n + 2n +9\ceil{\log_2 n} -3$. We demonstrate the circuit construction with a representative example.
\begin{example}
    Let $f=x_0x_2 \oplus x_1x_3 \oplus x_0x_1x_2x_3$ be a Boolean function with three nonlinear terms. Since $\deg(f) = 4$, the Clifford+$T$ decomposition achieves a T depth $\lceil \log_2 4 \rceil = 2$. The resulting circuit requires 12 ancilla qubits, 20 T gates, and 46 CNOT gates, with a CNOT depth of 12 (see Fig. \ref{fig:f-T}). Similarly, as $f$ does not include all possible monomials, the resource usage is substantially lower than the theoretical upper bound.
\begin{figure*}[htbp]
    \centering
    \includegraphics[width=0.9\linewidth]{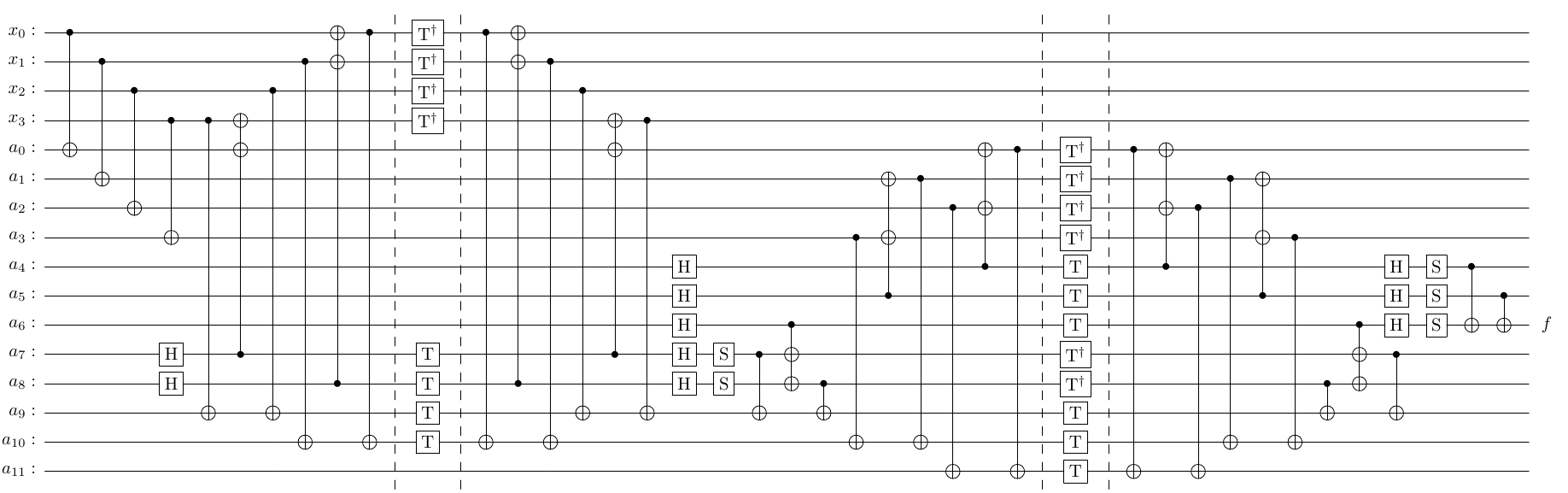}
    \caption{Quantum circuit implementing $f(x_0,x_1,x_2,x_3) = x_0x_2 \oplus x_1x_3 \oplus x_0x_1x_2x_3$ with T depth 2.}
    \label{fig:f-T}
\end{figure*}
\end{example}

While our construction achieves optimal T depth, it incurs exponential overhead in ancilla qubits and CNOT gates with respect to the number of variables. A more practical alternative is to allow a slight increase in T depth in exchange for a substantial reduction in ancilla and CNOT counts. For instance, replacing the T-depth-1 Toffoli decomposition from \cite{jaques2020eurocrypt} (Fig. \ref{fig:mathias}) with the logical-AND-based decomposition from \cite{gidney2018quantum} (Fig. \ref{fig:logicand}) in Theorem \ref{th:1}, reduces the ancilla count by $2^{n-1}\left( n-2\right)+1$ and the CNOT count by $3\cdot 2^{n-1}\left( n-2\right)+3$, while increasing the T depth from $\lceil \log_2 n \rceil$ to $\lceil \log_2 n \rceil + 1$. This trade-off highlights a practical direction for future research, wherein our construction can serve as a benchmark for T depth, while optimizing other quantum resources, even at the cost of slight T depth sub-optimality, rather than prioritizing depth reduction alone.

That is, this observation explains the optimal T depth of ANF-based implementations. Particularly, in circuits designed for cryptanalytic purposes, this optimal depth, as a benchmark, has not been explored earlier in a disciplined manner as a performance metric. In the next section, we illustrate this gap with concrete examples, using the widely studied AES block cipher. The idea will follow for any standard block ciphers in general.

\paragraph*{Optimal T depth quantum circuit of AES--}
\label{sec:cont2}
This section presents an optimal T depth quantum circuit for the AES algorithm. We first construct an optimal T depth implementation of the AES S-box. Then, we extended this to the full AES algorithm for key sizes of 128, 192, and 256 bits, corresponding to 10, 12, and 14 rounds, respectively.

AES is a symmetric-key block cipher standardized by NIST, operating on 128-bit blocks. The internal state is represented as a $4\times 4$ matrix $S\in\mathbb{F}^{4\times 4}_{2^8}$, where each element $S_{i,j} \in \mathbb{F}_{2^8}$ corresponds to one byte.

AES can be abstracted as a Boolean function $f:\mathbb{F}_2^{m+k} \rightarrow \mathbb{F}_2^{c}$, where $m$ is the message length (128 bits), $k$ the key length (128, 192, or 256 bits), and $c$ the ciphertext length (128 bits). From Theorem \ref{th:1}, the quantum circuit of AES can be designed with an optimal T depth of $\ceil{\log_2 256} = 8$, $\ceil{\log_2 320} = 9$, and $\ceil{\log_2 384} = 9$, depending on the key length. However, treating AES as a single combinational Boolean circuit is impractical as well as elusive due to its iterative structure. That is, while the T depth of 8 or 9 for the complete implementation of AES is theoretically understood, this cannot be seen as a practical benchmark.

For all practical purposes, AES is implemented round-wise. Consequently, one can design the quantum circuit of AES for each rounds and estimate its T depth based on actual implementation. Each AES round (except the last one) consists of four transformations, described as follows.
\begin{itemize}
    \item \textbf{SubBytes}: Applies the AES S-box to each of $S_{i,j} \in \mathbb{F}_{2^8}$ in the internal state. This is the only nonlinear operation and the primary contributor to T depth, requiring 16 parallel S-box evaluations: $S_{i,j}\leftarrow \text{S-box}(S_{i,j})$.
    \item \textbf{ShiftRows}: Performs a cyclic left shift of the rows. For row $i$, the transformation is given by $S_{i,j}\leftarrow S_{i,(j+i)\mod 4}$. Since, this is a SWAP operation, while designing the quantum circuit, this is implemented via rewiring and does not require any additional quantum resources.
    \item \textbf{MixColumns}: Applies a linear transformation to each column $\mathbf{c}\in\mathbb{F}_{2^8}^{4}$ of the internal state $S$ using an MDS matrix $M\in\mathbb{F}_{2^8}^{4\times 4}$: $\mathbf{c}\leftarrow M\cdot\mathbf{c}$. Since, this is a linear transformation, it can be implemented using only CNOT gates. Note that the final round of the AES omits the MixColumns operation.
    \item \textbf{AddRoundKey}: The AES key schedule algorithm expands the primary key $K$ to generate round keys $K_0, K_1,\ldots , K_r\in\mathbb{F}_{2^8}^{4\times 4}$, each used in a specific round. In this transformation, the internal state $S$ is XORed with a round key $K_i$, defined as $S\leftarrow S \oplus K_i$.
\end{itemize}

Since only the SubBytes transformation contains nonlinear components (S-box), specifically, MCT gates contributing to T depth, the overall T depth of an AES implementation is dominated by the S-box operations. As 16 S-boxes are evaluated in parallel per round, the total T depth of the AES circuit can be estimated as $r$ times the T depth of a single S-box, where $r$ denotes the number of rounds. We now present a detailed quantum resource estimation of the AES S-box with optimal T depth, as derived from Theorem~\ref{th:1}.

The coordinate Boolean functions of the AES S-box have a maximum algebraic degree of 7. Following Theorem \ref{th:1}, the quantum circuit of AES S-box can thus be designed with an optimal T depth of $\ceil{\log_2 7} = 3$.

The ANF of all coordinate functions combined includes all possible monomials except the degree 8 term $x_0x_1\ldots x_7$. Constructing these monomials in parallel requires $\big[\sum_{k=2}^{7} k\cdot\Mycomb[8]{k} - 8\big]= 1000$ ancilla qubits to copy the input variables, and an additional $\sum_{k=2}^{7}\Mycomb[8]{k} = 246$ ancilla qubits to store the monomials. The functional output of eight coordinate functions requires 8 more ancilla qubits. The implementation of $\sum_{k=2}^{7}\Mycomb[8]{k}$ MCT gates requires a further $\sum_{k=2}^{7}2(k-1)\Mycomb[8]{k} = 1524$ ancilla qubits. This leads to a total cost of $(1000+ 246+ 8+ 1524) = 2778$ ancilla qubits.

Copying the input bits for parallel construction requires $\big[\sum_{k=2}^{7} k\cdot\Mycomb[8]{k} - 8 \big]= 1000$ CNOT gates, with a CNOT depth of $\ceil{\log_2\left( \sum_{k=2}^{7}\Mycomb[7]{k-1}\right)} = 7$, applicable to both computation and uncomputation. Across all eight coordinate functions, there are 1001 monomials in total, requiring 1001 CNOT gates. The CNOT depth here is determined by the coordinate function with the largest number of monomials, which is 145. The MCT gate implementation incurs $\sum_{k=2}^{7}9(k-1)\Mycomb[8]{k} = 6858$ CNOT gates. Thus, the total CNOT count is $(2\times 1000)+1001+6858 = 9859$. As each Toffoli gate corresponds to a CNOT depth of 9 (see Fig. \ref{fig:mathias}), the total CNOT depth is given by $(2\times 7)+145+(3\times 9)=186$. The total number of T gates required is $\sum_{k=2}^{7}4(k-1)\Mycomb[8]{k} = 3048$.

Using Gidney’s logical-AND decomposition from Fig.~\ref{fig:logicand} instead of the T-depth-1 approach from Fig.~\ref{fig:mathias} reduces the ancilla count to 2016 and the CNOT count to 7573, at the expense of increasing the T depth to $\ceil{\log_2 7}+1 = 4$. Here, each Toffoli gate has a CNOT depth of 6, resulting in a total CNOT depth of $(2\times 7) + 145 + (3\times 6) = 177$.
\begin{table}[htbp]
    \centering
    \caption{Quantum circuits for AES S-box with corresponding resource estimates.}
    \resizebox{0.48\textwidth}{!}{
    \begin{tabular}{c c c c c c}
    \hline
    \hline
        Toffoli-to-T & Ancilla & \#CNOT & CNOT depth & \#T & T depth\\[0.1cm]
        \hline
        Using Fig. \ref{fig:logicand} & 2016 & 7573 & 177 & 3048 & 4\\[0.1cm]
        Using Fig. \ref{fig:mathias}  & 2778 & 9859 & 186 & 3048 & 3\\
        \hline
        \hline
    \end{tabular}}
    \label{tab:AES-Sbox}
\end{table}

In the SubBytes transformation, 16 AES S-boxes are executed in parallel, implying that the transformation requires 16 times more ancilla qubits, CNOT gates, and T gates compared to a single S-box, while the depths remain unchanged. The ShiftRows transformation incurs no additional quantum resource overhead. Moreover, \cite[Table 5]{yuan2024ToSC} shows that the MixColumns transformation can be implemented in place (without additional ancilla) using 98 CNOT gates with a CNOT depth of 13. Since, the MixColumns transformation is applied in parallel to four columns of the internal state, the total CNOT count becomes $4\times 98 = 392$, while the CNOT depth remains 13. Finally, bit-wise XOR of the round key involves 128 parallel CNOT gates, incurring CNOT depth 1. The overall quantum resource requirements for a single AES round are summarized in Table~\ref{tab:AES-1round}.
\begin{table}[htbp]
    \centering
    \caption{Quantum implementation of a single round of AES with corresponding resource estimates.}
    \resizebox{0.48\textwidth}{!}{
    \begin{tabular}{c c c c c c}
    \hline
    \hline
        Toffoli-to-T & Ancilla & \#CNOT & CNOT depth & \#T & T depth\\[0.1cm]
        \hline
        Using Fig. \ref{fig:logicand} & 32256 & 121688 & 191 & 48768 & 4\\[0.1cm]
        Using Fig. \ref{fig:mathias} & 44448 & 158264 & 200 & 48768 & 3\\
        \hline
        \hline
    \end{tabular}}
    \label{tab:AES-1round}
\end{table}

Assume that AES runs for $r\in\{10,12,14\}$ rounds, depending on the key length. Ancilla qubits used in the S-boxes of each round are reclaimed for subsequent rounds through measurement-based uncomputation, except for the 8 qubits required to store the functional output. Consequently, the ancilla count increases by $8\times 16 = 128$ per round, yielding a total ancilla requirement of $44448+128(r-1)$. Furthermore, since the final round omits the MixColumns operation, the total CNOT count is given by $158264r -392$, and the CNOT depth is $200r-13$. Both the T count and T depth scale linearly with the number of rounds. 

\begin{table}[htbp]
    \centering
    \caption{Optimal T depth quantum implementation of complete AES with corresponding resource estimates.}
    \resizebox{0.48\textwidth}{!}{
    \begin{tabular}{c c c c c c c}
    \hline
    \hline
        Key & Round & Ancilla & \#CNOT & CNOT depth & T count & T depth\\[0.1cm]
        \hline
        128 & 10 & 45600 & 1582248 & 1987 & 487680 & 30\\[0.1cm]
        192 & 12 & 45856 & 1898776 & 2387 & 585216 & 36\\[0.1cm]
        256 & 14 & 46112 & 2215304 & 2787 & 682752 & 42\\
        \hline
        \hline
    \end{tabular}}
    \label{tab:AES}
\end{table}

Table~\ref{tab:AES} presents the resource estimates for complete quantum implementations of AES with different key sizes, using the optimal T depth construction (Fig. \ref{fig:mathias}).
The T depth achieved in this construction is optimal, as no quantum circuit for AES (when the rounds are implemented one after another) can attain a lower T depth, regardless of the number of ancilla qubits or other resource overheads.

Table~\ref{tab:AES} highlights the practicality of our optimal T depth quantum circuit constructions for block ciphers. While earlier implementations have reported fewer ancilla qubits or lower gate counts, and in some cases, comparable T depths (e.g., \cite[Table 7]{aes2022asiacrypt} reports a T depth of 60 for AES and AES$^\dagger$ combined (that means 30 fpr each), and\cite[Table 13]{aes2023asiacrypt} reports T depths of 40, 48, and 56 for AES-128, 192, and 256, respectively), none of them formally establish the optimality of their constructions. Given that AES is one of the most important ciphers in symmetric-key cryptography, numerous brute-force attempts have been made to minimize various implementation parameters. The observed low T depths appear to be the outcome of such heuristic efforts rather than a systematic construction that proves the optimality across a large class of block ciphers. To the best of our knowledge, this is the first instance to formally report that T depths of 30, 46, and 42 are optimal for quantum implementations of AES-128, AES-192, and AES-256, respectively, and to demonstrate that such optimal T depth circuits can be realized in practice. Furthermore, instead of relying on brute-force optimization, we propose a generic construction that yields optimal T depth quantum circuits for a broad class of block ciphers, thereby setting a definitive benchmark for T depth in quantum cryptographic implementations. Table \ref{tab:AES-summary} summarizes the earlier implementations of AES towards T depth reduction along with the other resource estimates.
\begin{table*}[htbp]
    \centering
    \caption{Optimal T depth quantum implementation of complete AES: A comparison with previous works.}
    \begin{tabular}{c c c c c c c}
    \hline
    \hline
        Key size & References & Ancilla & ~CNOT count~ & ~CNOT depth~ & ~T count~ & ~T depth~\\[0.1cm]
        \hline
        128 & \cite[Table 4]{jaques2020eurocrypt} & 4244 &  284420 & NA & 54400 & 120\\[0.1cm]
        128 & \cite[Table 13]{aes2023asiacrypt} & 3689 & 132376 & NA & 27200 & 40\\[0.1cm]
        128 & \cite[Table 7]{aes2022asiacrypt} & 5576 & 285393 & NA & 62400 & 30\\[0.1cm]
        128 & [Present work] & 37464 & 1441128 & 1987 & 487680 & 30\\[0.1cm]
        \hline
        192 & \cite[Table 4]{jaques2020eurocrypt} & 4564 &  321021 & NA & 60928 & 144\\[0.1cm]
        192 & \cite[Table 13]{aes2023asiacrypt} & 3945 & 149256 & NA & 30464 & 48\\[0.1cm]
        192 & [Present work] & 37480 & 1729824 & 2387 & 585216 & 36\\[0.1cm]
        \hline
        256 & \cite[Table 4]{jaques2020eurocrypt} & 4884 &  393534 & NA & 75072 & 168\\[0.1cm]
        256 & \cite[Table 13]{aes2023asiacrypt} & 4457 & 187128 & NA & 38080 & 56\\[0.1cm]
        256 & [Present work] & 37496 & 2017736 & 2787 & 682752 & 42\\[0.1cm]
        \hline
        \hline
    \end{tabular}
    \label{tab:AES-summary}
\end{table*}

\paragraph*{Conclusions--}
\label{sec:con}
Given the ANF of an arbitrary $n$-input, $m$-output Boolean function $f$ having algebraic degree $k$, this work presents the first construction of an optimal T depth quantum circuit for $f$, along with a complete resource estimation. The optimization criteria here is only related to the minimum T depth that has definite importance in terms of circuit implementation for quantum paradigm. Thus, the other required resources are high. We recommend that during any circuit design, first it must be understood what should be the benchmark T depth of the circuit and then only other parameters may be optimized, as minimization of T depth is one of the most important criteria in quantum circuit design. That is, the benchmark for optimal T depth for Boolean circuits implemented in quantum algorithms are completely settled with this treatment and we present the design of block ciphers as concrete examples.



\end{document}